\newlist{shortlist}{itemize}{1}
\setlist[shortlist]{label=\textbullet,nosep}
\newtheorem{thm}{Theorem}
\newtheorem{cor}[thm]{Corollary}
\newtheorem{conj}[thm]{Conjecture}
\theoremstyle{definition}
\newtheorem*{conj*}{Conjecture}
\newcommand{\etal}{{\emph{et al.}}}
\newcommand{\Path}[1]{\ensuremath{{\sf P}_{#1}}}          % Path
\newcommand{\C}[1]{\ensuremath{{\sf C}_{#1}}}             % Cycle
\newcommand{\K}[1]{\ensuremath{{\sf K}_{#1}}}             % Complete graph
\newcommand{\M}[1]{\ensuremath{{\sf M}_{#1}}}             % Complement of a matching
\newcommand{\extG}[2]{\ensuremath{{\sf E}_{#1,#2}}}    % Extremal graph
\newcommand{\extg}[3]{\ensuremath{{\sf E}_{#1,#2,#3}}}    % Extremal graph
\newcommand{\St}[2]{\ensuremath{{\sf S}_{#1,#2}}}    % Extremal graph
\newcommand{\extH}[2]{\ensuremath{{\sf H}_{#1,#2}}}    % Extremal graph
\newcommand{\dist}{\ensuremath{dist}}              % distance between two vertices
\title{Minimum Eccentric Connectivity Index for Graphs\\with Fixed
	Order and Fixed Number of Pending Vertices}
\author{
	\large
	\textsc{
		Gauvain Devillez\textsuperscript{1},
		Alain Hertz\textsuperscript{2}%
		\footnote{Corresponding author : email  alain.hertz@gerad.ca;  tel. +1-514 340 6053.},
		}\\
	\textsc{
		Hadrien M\'elot\textsuperscript{1},
		Pierre Hauweele\textsuperscript{1}}\\[3mm]
	\normalsize \textsuperscript{1} Computer Science Department - Algorithms Lab\\
	\normalsize University of Mons, Mons, Belgium\\[3mm]
	\normalsize \textsuperscript{2} Department of Mathematics and Industrial
	Engineering\\
	\normalsize Polytechnique Montréal - Gerad, Montréal, Canada
}
\newcommand*{\eci}{\ensuremath{\xi^c}\xspace}
\newcommand*{\ecc}[1]{\ensuremath{e_{G}(#1)}\xspace}
\newcommand*{\eccp}[1]{\ensuremath{e_{G'}(#1)}\xspace}
\newcommand*{\degr}[1]{\ensuremath{d_{G}(#1)}\xspace}
\newcommand*{\degrp}[1]{\ensuremath{d_{G'}(#1)}\xspace}
\newcommand*{\sizs}[1]{\ensuremath{|#1|}\xspace}
\begin{document}
	\normalsize%
	\maketitle

\vspace{-2.5ex}
	\begin{abstract}
		The eccentric connectivity
		index of a connected graph $G$ is the sum over all vertices $v$ of the product $\degr{v}\ecc{v}$, where $\degr{v}$ is the degree of $v$ in $G$ and $\ecc{v}$ is the maximum distance
		between $v$ and any other vertex of $G$. This index is helpful for the prediction of biological activities 
		of diverse nature, a molecule being modeled as a graph where 
		atoms are represented by vertices and 
		chemical bonds by edges.
		We characterize those graphs which
		have the smallest eccentric connectivity index among all connected graphs of a given order $n$. Also, given two integers $n$ and $p$ with $p\leq n-1$, we characterize those graphs which have the smallest
		eccentric connectivity index among all connected graphs of order $n$ with $p$ pending vertices.
	\end{abstract}
	\section{Introduction}
	
	A chemical graph is a representation of the structural formula of a chemical compound in terms of graph theory where atoms are represented by vertices and chemical bonds by edges. Arthur Cayley \cite{Cayley1874} was probably the first to publish results that consider chemical graphs. 
	In an attempt to analyze the chemical properties of alkanes, Wiener \cite{Wie47} has introcuced the \emph{path number index}, nowadays called  \emph{Wiener index}, which is 
	defined as the sum of the lengths of the shortest paths between all pairs of vertices. Mathematical properties and chemical applications of this distance-based index have been widely researched.

Numerous  other topological  indices are used for  quantitative  structure-property  relationship (QSPR)  and quantitative  structure-activity relationship (QSAR) studies that help to describe and understand the structure of molecules \cite{Tod00, Kar00}, among which  the \emph{eccentric connectivity index} which can be defined as follows.
		Let $G=(V,E)$ be a simple connected undirected graph. The \emph{distance} $\dist_G(u,v)$
	between two vertices $u$ and $v$ in $G$ is the number of edges of a shortest path in $G$ connecting $u$ and $v$. 
	The \emph{eccentricity}
	$\ecc{v}$ of a vertex $v$ is the maximum distance between $v$ and
	any other vertex, that is $\max \{ \dist_G(v, w) ~|~ w \in V \}$. 
	The \emph{eccentric connectivity index} $\eci(G)$ of $G$ is defined by
	\[
	\eci(G) = \sum_{v \in V} \degr{v} \ecc{v}.
	\]
	This index was introduced by Sharma \etal~in~\cite{Sharma97} and successfully used for mathematical models of biological activities of diverse nature \cite{Duj08, Gup02, Kumar2004, Sar01, Ilic11}.
	Recently, Hauweele \etal \cite{Hau18} have characterized those graphs which
	have the largest eccentric connectivity index among all connected graphs
	of given order $n$. These results are summarized in Table \ref{table1}, where 
	\begin{itemize}
		\vspace{-0.2cm}\item $\K{n}$ is the complete graph of order $n$;
		\vspace{-0.2cm}\item $\Path{n}$ is the
		path of order $n$;
		\vspace{-0.2cm}\item $\ensuremath{{\sf W}_{n}}$ is the
		wheel of order $n$, i.e., the graph obtained by joining a vertex to all vertices of a cycle of order $n-1$;
		\vspace{-0.2cm}\item $\M{n}$ is the graph obtained from $\K{n}$ by removing
		a maximum matching and,
		if $n$ is odd, an additional edge adjacent to the unique vertex that still degree $n-1$;
		\vspace{-0.2cm}\item $\extG{n}{D}$ is
		the graph constructed from a path
		$u_0-u_1-\ldots-u_D$ by joining each vertex of a clique $\K{n-D-1}$
		to $u_0$, $u_1$ and $u_2$.
	\end{itemize}
	
\begin{table}[h!]
	\begin{center}
		\caption{Largest eccentric connectivity index for a fixed order $n$}
		\label{table1}
		%\scalebox{0.7}{
		
		\begin{tabular}{ccc}
				$n$  & optimal graphs \\
				\hline
				1&$\K{1}$\\
				2&$\K{2}$\\
				3&$\K{3}$ and $\Path{3}$\\
				4&$\M{4}$\\
				5&$\M{5}$ and $\ensuremath{{\sf W}_{5}}$\\
				6&$\M{6}$\\
				7&$\M{7}$\\
				8&$\M{8}$ and $\extG{8}{4}$\\
				$\geq 9$&$\extG{n}{\left\lceil\frac{n+1}{3}\right\rceil+1}$
						\end{tabular}
					\end{center}
				\end{table}

\noindent In addition to the above-mentioned graphs, we will also consider the following ones:
\begin{itemize}
	\vspace{-0.2cm}\item $\C{n}$ is the chordless cycle of order $n$;
	\vspace{-0.2cm}\item $\St{n}{x}$ is the graph of order $n$ obtained by linking all vertices of a stable set of $n-x$ vertices with all vertices of a clique $\K{x}$. The graph $\St{n}{1}$ is called a \emph{star}.
\end{itemize}
Also, for $n\geq 4$ and $p\leq n-3$, let  $\extH{n}{p}$ be the graph of order $n$ obtained by adding a dominating vertex (i.e., a vertex linked to all other vertices) to the graph or order $n-1$ having $p$ vertices of degree 0, and
		\begin{itemize}
			\vspace{-0.2cm}\item  $n-1-p$ vertices of degree 1 if $n-p$ is odd;
			\vspace{-0.2cm}\item  $n-2-p$ vertices of degree 1 and one vertex of degree 2 if $n-p$ is even.
		\end{itemize}

For illustration, $\extH{8}{3}$ and $\extH{9}{3}$ are drawn on Figure \ref{fig1}. Note that $\extH{4}{0}\simeq \St{4}{2}$. Moreover, $\extH{4}{0}$ has two dominating vertices while $\extH{4}{1}$ and $\extH{n}{p}$ have exactly one dominating vertex for all $n\geq 5$ and $p\leq n-3$.

\begin{figure}[h!]
	\centering\includegraphics[scale=0.75]{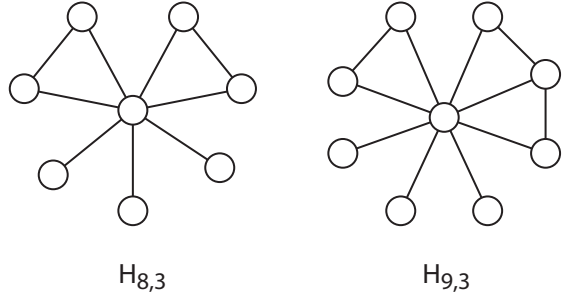}
	\captionof{figure}{Two graphs with $p=3$ pending vertices}.
	\label{fig1}
\end{figure}
In this paper, we first give an alternative proof to a result of Zhou and Du \cite{ZD10} showing that the stars are the only graphs with smallest eccentric connectivity index among all connected graphs of given order $n\geq 4$. These graphs have $n-1$ pending vertices (i.e., vertices of degree 1). We then consider all pairs $(n,p)$ of integers with $p\leq n-1$ and characterize the graphs with smallest eccentric connectivity index among all connected graphs of order $n$
with $p$ pending vertices.

\section{Minimizing $\eci$ for graphs with fixed order}

$\K{1}$ and $\K{2}$ are the only connected graphs with 1 and 2 vertices, respectively, while 
$\K{3}$ and $\Path{3}$ are the only connected graphs with 3 vertices. Since $\eci(\K{3})=\eci(\Path{3})=6$, all connected graphs of given order $n\leq 3$ have the same eccentric connectivity index. From now on, we therefore only consider connected graphs with fixed order $n\geq 4$. A proof of the following theorem was already given by Zhou and Du in \cite{ZD10}. Ours is slightly different.
%In what follows, we denote by $\St{n}{x}$ the graph of order $n$ obtained by linking all vertices of a stable set of $n-x$ vertices with all vertices of a clique $\K{x}$. Hence, $\St{n}{1}$ is a star.

\begin{thm}
	Let $G$ be a connected graph of order $n\geq 4$. Then $\eci(G) \ge 3(n-1)$, 
	with equality if and only if $G\simeq\St{n}{1}$.
\end{thm}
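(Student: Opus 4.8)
The plan is to pin down the unique equality case by first checking the star attains the bound and then splitting on the radius of $G$. A direct computation gives the target value: in $\St{n}{1}$ the central vertex has degree $n-1$ and eccentricity $1$, while each of the $n-1$ leaves has degree $1$ and eccentricity $2$, so $\eci(\St{n}{1}) = (n-1)\cdot 1 + (n-1)\cdot 1\cdot 2 = 3(n-1)$. This shows the bound is attained; the remaining task is to prove it is a genuine minimum and that nothing else reaches it. I would organize everything around whether the radius (the smallest eccentricity) equals $1$ or is at least $2$.

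First I would dispose of the case where every vertex has eccentricity $\ecc{v}\ge 2$. Since $G$ is connected it has at least $n-1$ edges, so $\sum_{v}\degr{v}=2|E|\ge 2(n-1)$, and therefore $\eci(G)\ge 2\sum_{v}\degr{v}\ge 4(n-1)>3(n-1)$. Hence no graph of radius at least $2$ can meet the bound, and in particular none of them can be an equality case.

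The substantive case is radius $1$, i.e. $G$ has at least one dominating vertex; let $k\ge 1$ be the number of vertices of eccentricity $1$. Because a dominating vertex is at distance $1$ from everything, every other vertex has eccentricity exactly $2$, and each non-dominating vertex is adjacent to all $k$ dominating vertices, hence has degree at least $k$. Splitting the index accordingly gives $\eci(G) = k(n-1) + 2\sum_{v \text{ non-dom}}\degr{v} \ge k(n-1) + 2k(n-k) = k(3n-2k-1)$. Viewed as a function of $k$, the right-hand side is a downward parabola with roots $0$ and $(3n-1)/2>n$, so on the integer range $1\le k\le n$ it is minimized at an endpoint; it equals exactly $3(n-1)$ at $k=1$, and at both $k=2$ and $k=n$ it exceeds $3(n-1)$ for $n\ge 4$, so by concavity $k(3n-2k-1)>3(n-1)$ for every $k\ge 2$. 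Thus equality forces $k=1$.

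Finally I would trace back the single inequality used when $k=1$: equality requires $\sum_{v\text{ non-dom}}\degr{v}=n-1$, i.e. each of the $n-1$ non-dominating vertices has degree exactly $1$; as each must be adjacent to the unique dominating vertex, $G$ is one center joined to $n-1$ leaves, namely $\St{n}{1}$. The only place demanding care is the sub-case $k\ge 2$ of radius $1$: one must ensure the crude degree bound $\degr{v}\ge k$ for non-dominating vertices is still strong enough that the parabola estimate stays strictly above $3(n-1)$ across the whole range $2\le k\le n$, including the degenerate extreme $k=n$ where $G=\K{n}$ and the estimate reduces to $n(n-1)$.
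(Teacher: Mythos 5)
Your proof is correct, and it shares the paper's overall skeleton --- a case analysis on the number $k$ of dominating vertices, with the equality case $k=1$ handled identically (degree $\geq 1$ and eccentricity exactly $2$ for the non-dominating vertices, forcing the star) --- but your two remaining cases are argued differently, and more uniformly. For the case with no dominating vertex, the paper distinguishes pending from non-pending vertices: a pending vertex has eccentricity at least $3$ because its neighbor is not dominating, so every term satisfies $\degr{v}\ecc{v}\geq 3$ and $\eci(G)\geq 3n$. You instead combine the handshake lemma with connectivity, $\eci(G)\geq 2\sum_v \degr{v}=4|E|\geq 4(n-1)$, which needs no discussion of pending vertices and yields a stronger bound. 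For $k\geq 2$, the paper gives ad hoc estimates split into the subcases $n=4$ and $n>4$, whereas you prove $\eci(G)\geq k(3n-2k-1)$ and use concavity of this quadratic to reduce the whole range $2\leq k\leq n$ to the two endpoint evaluations $f(2)=6n-10$ and $f(n)=n(n-1)$; this is exactly the quadratic $-2x^2+x(3n-1)$ that the paper itself introduces later, in the proof of Theorem \ref{lem3}, so in effect you have unified the $x=1$ and $x>1$ cases of Theorem 1 with the technique of Theorem \ref{lem3}. Both routes are elementary; yours buys uniformity (no special treatment of $n=4$, no pending-vertex argument) at the cost only of the observation that a concave function on an interval attains its minimum at an endpoint.
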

\begin{proof}
	Let $x$ be the number of dominating vertices (i.e., vertices of degree $n-1$) in $G$. We distinguish three cases.
	\begin{itemize}
		\vspace{-0.1cm}\item If $x=1$, then let $u$ be the dominating vertex in $G$. Clearly, $\ecc{u}=1$ and $\degr{u}=n-1$. All vertices $v\neq u$ have eccentricity $\ecc{v}=2$, while their degree is at least 1 (since $G$ is connected). Hence, $\eci(G)\geq (n-1)+2(n-1)=3(n-1)$, with equality if and only if all $v\neq u$ have degree 1, i.e., $G\simeq\St{n}{1}$.
		\vspace{-0.2cm}\item If $x>1$, then all dominating vertices $u$ have $\degr{u}\ecc{u}=n-1$, while all non-dominating vertices $v$ have $\degr{v}\geq x\geq 2$ and $\ecc{v}\geq 2$, which implies $\degr{u}\ecc{u}\geq 4$. If $n=4$, we therefore have $\eci(G)\geq 3n>3(n-1)$, while if $n>4$, we have $\eci(G)\geq 2(n-1)+4(n-2)=6n-10>3(n-1)$.
		\vspace{-0.2cm}\item If $x=0$, then every pending vertex $v$ has $\ecc{v}\geq 3$ since its only neighbor is a non-dominating vertex. Since the eccentricity of the non-pending vertices is at least two, we have $\degr{v}\ecc{v}\geq 3$ for all vertices $v$ in $G$, which implies $\eci(G)\geq 3n>3(n-1)$.
	\end{itemize}
\end{proof}

Stars have $n-1$ pending vertices. As will be shown in the next section, a similar result is more challenging when the total number of pending vertices is fixed to a value strictly smaller than $n-2$.

\section{Minimizing $\eci$ for graphs with fixed order and fixed number of pending vertices}

Let $G$ be a connected graph of order $n\geq 4$ with $p$ pending vertices.
Clearly, $p\leq n-1$, and $G\simeq\St{n}{1}$ if $p=n-1$. For $p=n-2$,  let $u$
and $v$ be the two non-pending vertices. Note that $u$ is adjacent to $v$ since
$G$ is connected. Clearly, $G$ is obtained by linking $x\leq n-3$ vertices of a stable set $S$ of $n-2$ vertices to $u$, and the $n-2-x$ other vertices of $S$ to $v$. The $n-2$ pending vertices $w$ have  $\degr{w}=1$ and $\ecc{w}=3$, while $\ecc{u}=\ecc{v}=2$ and $\degr{u}+\degr{v}=n$. Hence $\eci(G)=3(n-2)+2n=5n-6$ for all graphs of order $n$ with $n-2$ pending vertices. 

The above observations show that all graphs of order $n$ with a fixed number $p\geq n-2$ of pending vertices have the same eccentric connectivity index. As will be shown, this is not the case when $n\geq 4$ and $p\leq n-3$. We will prove that $\extH{n}{p}$ is almost always the unique graph minimizing the eccentric connectivity index. Note that

\[
\eci(\extH{n}{p})=\left\{\begin{array}{ll}
n-1+2p+4(n-p-1)=5n-2p-5 &\mbox{if }n-p\mbox{ is odd}  \\
n-1+2p+4(n-p-2)+6=5n-2p-3 &\mbox{if }n-p\mbox{ is even}.
\end{array}\right.
\]

%In what follows, we denote by $\C{n}$ of order $n$, and by $\extH{n}{p}$ the graph of order $n$ having one dominating vertex, $p$ pending vertices, and
%\begin{itemize}
%	\vspace{-0.2cm}\item $n-p-1$ vertices of degree 2 if $n-p$ is odd;
%	\vspace{-0.3cm}\item $n-p-2$ vertices of degree 2 and one vertex of degree 3 if $n-p$ is even.
%\end{itemize}
%For illustration, $\extH{8}{3}$ and $\extH{9}{3}$ are drawn on Figure \ref{fig1}. Note that
%\[
%\eci(\extH{n}{p})=\left\{\begin{array}{ll}
%n-1+2p+4(n-p-1)=5n-2p-5 &\mbox{if }n-p\mbox{ is odd}  \\
%n-1+2p+4(n-p-2)+6=5n-2p-3 &\mbox{if }n-p\mbox{ is even}
%\end{array}\right.
%\]
%
%\begin{figure}[h!]
%	\centering\includegraphics[scale=0.7]{Fig1.eps}
%	\captionof{figure}{Two graphs with $p=3$ pending vertices}.
%	\label{fig1}
%\end{figure}

\vspace{0.3cm}\begin{thm}\label{lem2}
	Let $G$ be a connected graph of order $n\geq 4$ with $p\leq n-3$ pending vertices and one dominating vertex. Then $\eci(G)\geq \eci(\extH{n}{p})$, with equality if and only if $G\simeq\extH{n}{p}$.
\end{thm}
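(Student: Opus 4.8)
The plan is to exploit the single dominating vertex to pin down every eccentricity, which collapses $\eci(G)$ into an affine function of the edge count; the whole theorem then becomes a pure edge-minimization problem. Let $u$ be the unique dominating vertex, so $\ecc{u}=1$ and $\degr{u}=n-1$. For any $v\neq u$, the vertex $v$ is not dominating, hence has a non-neighbor $w$ with $\dist_G(v,w)\geq 2$; conversely $\dist_G(v,w)\leq\dist_G(v,u)+\dist_G(u,w)=2$ because $u$ is adjacent to everything. Thus $\ecc{v}=2$ for every $v\neq u$, and
\[
\eci(G)=(n-1)+2\sum_{v\neq u}\degr{v}=(n-1)+2\bigl(2|E(G)|-(n-1)\bigr)=4|E(G)|-(n-1).
\]
So, within the class considered, minimizing $\eci$ is exactly minimizing the number of edges.

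Next I would pass to $H:=G-u$ on the remaining $n-1$ vertices, where $|E(G)|=(n-1)+|E(H)|$. A vertex $v\neq u$ is pending in $G$ iff it is isolated in $H$, so $H$ has exactly $p$ isolated vertices while the other $m:=n-1-p$ vertices satisfy $\degr{v}\geq 1$ in $H$. Summing degrees over $H$ gives $2|E(H)|\geq m$, hence $|E(H)|\geq\lceil m/2\rceil$ and
\[
\eci(G)\geq 4\bigl((n-1)+\lceil m/2\rceil\bigr)-(n-1).
\]
Splitting on the parity of $n-p$ (equivalently of $m$) and simplifying, this lower bound equals $5n-2p-5$ when $n-p$ is odd and $5n-2p-3$ when $n-p$ is even, i.e.\ exactly the displayed value of $\eci(\extH{n}{p})$.

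For the equality case I would read off the degree sequence of the $m$ non-isolated vertices. If $m$ is even, $|E(H)|=m/2$ forces every such degree to equal $1$, i.e.\ a perfect matching. If $m$ is odd, $|E(H)|=(m+1)/2$ forces exactly one vertex of degree $2$ and all others of degree $1$; since no isolated vertices are allowed, this component structure must be a $\Path{3}$ together with a perfect matching on the remaining $m-3$ vertices. In both cases $H$ is determined up to isomorphism, and restoring the dominating vertex $u$ recovers precisely $\extH{n}{p}$.

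The main obstacle is making the uniqueness step fully airtight and disposing of the degenerate configurations. In particular I must verify that the edge-minimizing $H$ does not inadvertently produce a second dominating vertex of $G$: this is automatic once $p\geq 1$, since a vertex of degree $n-2$ in $H$ would have to be adjacent to an isolated vertex, a contradiction. The residual corner case is $n=4,\,p=0$, where no connected graph with a single dominating vertex exists at all and $\extH{4}{0}\simeq\St{4}{2}$ already carries two dominating vertices; this must be acknowledged separately, after which the remaining arguments reduce to routine arithmetic built on the eccentricity-collapse identity.
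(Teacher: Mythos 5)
Your proof is correct and takes essentially the same approach as the paper's: the unique dominating vertex forces $\ecc{v}=2$ for every other vertex $v$, so \eci collapses to a pure degree (equivalently, edge) sum, which is then minimized subject to the pending-vertex and parity constraints. Your write-up is more detailed than the paper's brief argument — notably the explicit handshake bound in $G-u$, the component analysis (a perfect matching, or a \Path{3} plus a matching) that pins down uniqueness, and the remark that the class is vacuous when $n=4$ and $p=0$ — but the underlying idea is identical.
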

\begin{proof}
	The dominating vertex $u$ in $G$ has $\degr{u}\ecc{u}=n-1$, the pending vertices $v$ have $\degr{v}\ecc{v}=2$, and the other vertices $w$ have $\ecc{w}=2$ and $\degr{w}\geq 2$. Hence, $\eci(G)$ is minimized if all non-pending and non-dominating vertices have degree 2, except one that has degree 3 if $n-p-1$ is odd. In other words, $\eci(G)$ is minimized if and only if $G\simeq\extH{n}{p}$. 
\end{proof}

\vspace{0.3cm}\begin{thm}\label{lem3}
	Let $G$ be a connected graph of order $n\geq 4$, with at least two dominating vertices. 
	\begin{itemize}
		\vspace{-0.2cm}\item If $n=4$ then $\eci(G)\geq 12$, with equality if and only if $G\simeq\K{4}$.
		\vspace{-0.2cm}\item If $n=5$ then $\eci(G)\geq 20$, with equality if and only if $G\simeq\St{5}{2}$ or $G\simeq\K{5}$.
		\vspace{-0.2cm}\item If $n\geq 6$ then $\eci(G)\geq 6n-10$, with equality if and only if $G\simeq\St{n}{2}$.
	\end{itemize}
\end{thm}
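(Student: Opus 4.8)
The plan is to parametrize by the number $x$ of dominating vertices and reduce the entire statement to minimizing a single quadratic in $x$. Since $G$ has at least one dominating vertex, its diameter is at most $2$, so every vertex has eccentricity $1$ (if dominating) or $2$ (otherwise). With $x\geq 2$ dominating vertices, each of the $n-x$ non-dominating vertices is adjacent to all $x$ dominating ones, hence has eccentricity exactly $2$ and degree at least $x$. First I would record the resulting lower bound
\[
\eci(G)\;\geq\; x(n-1)\;+\;2x(n-x)\;=:\;f(x),
\]
where the first term counts the dominating vertices (each contributing $\degr{\cdot}\ecc{\cdot}=n-1$) and the second bounds the contribution of the non-dominating ones. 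Crucially, equality holds if and only if every non-dominating vertex has degree exactly $x$, i.e.\ is adjacent only to the clique of dominating vertices, which is precisely the condition $G\simeq\St{n}{x}$.

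Next I would study $f(x)=-2x^2+(3n-1)x$ over the admissible range of $x$. One small but useful observation is that $x=n-1$ is impossible: a single non-dominating vertex would then have degree at least $n-1$ and thus be dominating. So $x$ ranges over $\{2,\dots,n-2\}\cup\{n\}$, with $x=n$ corresponding to $\K{n}$. Because $f$ is a strictly concave (downward) parabola, its minimum over the interval $[2,n]$ is attained at an endpoint, so $\eci(G)\geq f(x)\geq \min\{f(2),f(n)\}$ for every admissible $G$. I would then compute $f(2)=6n-10$ and $f(n)=n(n-1)$ and compare them through the factorization
\[
f(n)-f(2)=n^2-7n+10=(n-2)(n-5).
\]

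The three cases of the statement now fall out from the sign of $(n-2)(n-5)$. For $n=4$ this quantity is negative, so $f(n)<f(2)$ and the minimum $f(4)=12$ is attained uniquely at $x=4$, forcing $G\simeq\K{4}$. For $n=5$ it vanishes, so $f(2)=f(5)=20$ and the bound is met exactly at $x=2$ and $x=5$, giving $G\simeq\St{5}{2}$ or $G\simeq\K{5}$. For $n\geq 6$ it is positive, so $f(2)=6n-10<f(n)$ is the minimum; here I would verify that within $[2,n]$ the equation $f(x)=f(2)$ has its second root at $(3n-5)/2>n$, so $x=2$ is the only admissible minimizer and $G\simeq\St{n}{2}$.

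The routine part is the arithmetic; the step demanding the most care is the equality analysis. I must make sure that concavity really pins the minimum to an endpoint, that intermediate values of $x$ never tie the bound (hence do not introduce spurious extremal graphs), and that in each case the equality condition $G\simeq\St{n}{x}$ is invoked for exactly the minimizing value(s) of $x$. The $n=5$ tie is the one place where two distinct values of $x$ both achieve the bound, which is why the characterization legitimately lists two graphs there.
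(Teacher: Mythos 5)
Your proof is correct and follows essentially the same route as the paper: both parametrize by the number $x$ of dominating vertices, derive the lower bound $\eci(G)\geq -2x^2+(3n-1)x$ with equality exactly for $G\simeq\St{n}{x}$, and then minimize this concave quadratic over the admissible range of $x$. If anything, your endpoint analysis via the factorization $f(n)-f(2)=(n-2)(n-5)$ and the second root $(3n-5)/2$ is more complete than the paper's treatment, which enumerates the values of $f$ for $n=4,5$ and, for $n\geq 6$, simply asserts that the minimum is attained at $x=2$ without comparing against the other endpoint $x=n$.
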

\begin{proof} 
	Let $x$ be the number of dominating vertices in $G$. Then $\degr{u}\ecc{u}=n-1$ for all dominating vertices $u$, while $\ecc{v}=2$ and $\degr{v}\geq x$ for all other vertices $v$. Hence, $\eci(G)\geq -2 x^2 + x(3 n-1)$.
	\begin{itemize}
		\vspace{-0.1cm}\item If $n=4$ then $\eci(G)\geq f(x)=-2 x^2 + 11x$. Since $2\leq x\leq 4$, $f(2)=14, f(3)=15$, and $f(4)=12$, we conclude that  $\eci(G)\geq 12$, with equality if and only if $x=4$, which is the case when $G\simeq\K{4}$.
		\vspace{-0.3cm}\item If $n=5$ then $\eci(G)\geq f(x)=-2 x^2 + 14x$. Since $2\leq x\leq 5$, $f(2)=f(5)=20$ and $f(3)=f(4)=24$, we conclude that  $\eci(G)\geq 20$, with equality if and only if $x=2$ or $5$, which is the case when $G\simeq\St{5}{2}$ or $G\simeq\K{5}$.
		\vspace{-0.3cm}\item If $n\geq 6$ then $-2 x^2 + x(3 n-1)$ is minimized for $x=2$, which is the case when $G\simeq\St{n}{2}$.
	\end{itemize}		
\end{proof}

\begin{thm}\label{lem4}
	Let $G$ be a connected graph of order $n\geq 4$, with $p\leq n-3$ pending vertices and no dominating vertex. Then $\eci(G)>\eci(\extH{n}{p})$ unless $n=5$, $p=0$ and $G\simeq\C{5}$, in which case $\eci(G)=\eci(\extH{n}{0})=20$.
\end{thm}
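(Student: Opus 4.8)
The plan is to exploit that $\extH{n}{p}$ has diameter $2$, so a graph $G$ with no dominating vertex can only threaten its optimality when $G$ itself has small diameter; a large diameter should create enough eccentricity to win comfortably. Throughout I would use that, since $G$ has no dominating vertex, every vertex satisfies $\ecc{v}\geq 2$ and every pending vertex satisfies $\ecc{v}\geq 3$ (its unique neighbour is non-dominating). Writing $m=|E(G)|$, this lets me split the index as
\[
\eci(G)=\sum_{v}\degr{v}\ecc{v}=4m+\sum_{v}\degr{v}\bigl(\ecc{v}-2\bigr),
\]
where every summand on the right is nonnegative and each pending vertex already contributes at least $1$. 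I would compare everything against $\eci(\extH{n}{p})\leq 5n-2p-3$, keeping the exact value $\eci(\extH{5}{0})=20$ for the equality discussion, and then branch on the diameter $\diam=\diam(G)$.

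The two extreme ranges are quick. If $\diam=2$, then $G$ has no pending vertex (a pending vertex forces its neighbour to be dominating), so $p=0$ and $\eci(G)=4m$; here $\delta(G)\geq 2$ and the number of edges of a diameter-$2$ graph without a universal vertex is at least $2n-5$ (a short argument counting edges across the non-neighbourhood of a minimum-degree vertex), so $\eci(G)\geq 8n-20>5n-3\geq\eci(\extH{n}{0})$ for $n\geq 6$; the cases $n=4,5$ are classified by hand, and for $n=5$ the only such graph attaining $m=5$ is $\C{5}$ (minimum degree $2$ forces $2$-regularity), giving $\eci(G)=20=\eci(\extH{5}{0})$, the announced exception. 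If $\diam\geq 5$, a breadth-first search from a diametral endpoint shows every vertex $v\in L_i$ satisfies $\ecc{v}\geq\max(i,\diam-i)\geq 3$, so $\eci(G)\geq 6m\geq 6(n-1)>5n-2p-3$.

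The crux is $\diam\in\{3,4\}$, the only diameters close to that of $\extH{n}{p}$. The same layering localises the vertices of eccentricity $2$ to the middle layers (those $L_i$ with $\diam-2\leq i\leq 2$), the outer layers already having eccentricity at least $3$. Let $a$ be the number of vertices of eccentricity $2$. On one side, since all $p$ pending vertices lie outside this set, $\sum_{v}\degr{v}(\ecc{v}-2)\geq 2(n-a)-p$, which together with $m\geq n-\tfrac{p}{2}$ yields $\eci(G)\geq 6n-2a-3p$; this already beats $5n-2p-3$ whenever $2a<n-p+3$. The remaining, and hardest, regime is when the eccentricity-$2$ vertices are numerous: I would then extract extra edges from them, using that an eccentricity-$2$ vertex in a deep layer must reach the far endpoint within two steps and so carries neighbours in two different layers, or more generally that a low-degree eccentricity-$2$ vertex forces its few neighbours to have large degree in order to cover $V\setminus\{v\}$. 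Summing these forced adjacencies inflates $m$ enough (to the order of the number of such vertices) to push $4m$ past the target.

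I expect the balancing of these two estimates in the $\diam=3$ case to be the main obstacle: it is the diameter closest to $\extH{n}{p}$, its eccentricity-$2$ vertices may fill \emph{two} layers rather than one, so the eccentricity-excess side is weakest and one must lean hardest on the edge-counting side, with careful bookkeeping of where the $p$ pending vertices sit (they always help, contributing eccentricity excess while costing only degree $1$). The finitely many small values of $n$ left uncovered by these asymptotic inequalities would be verified directly.
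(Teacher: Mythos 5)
Your outer cases are sound: for $\diam=2$ the Erd\H{o}s--R\'enyi bound $m\geq 2n-5$ for diameter-two graphs with no dominating vertex (plus the fact that $p=0$ there) gives $\eci(G)=4m\geq 8n-20$, which beats the target for $n\geq 6$ and correctly isolates $\C{5}$ at $n=5$; for $\diam\geq 5$ every eccentricity is at least $3$ and $6m\geq 6(n-1)$ closes the case; and your estimate $\eci(G)\geq 6n-2a-3p$ for $\diam\in\{3,4\}$ is valid. But the entire difficulty of the theorem sits in the subcase you leave open: $\diam\in\{3,4\}$ with many eccentricity-$2$ vertices ($2a\geq n-p+3$). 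There you offer only a heuristic --- ``extract extra edges \ldots\ inflates $m$ enough (to the order of the number of such vertices)'' --- and you say yourself this balancing is the main obstacle. ``To the order of'' cannot suffice here: the comparison against $\eci(\extH{n}{p})\geq 5n-2p-5$ is tight up to additive constants (the exception $\C{5}$ shows there is no uniform slack), so the hard regime needs an exact inequality together with an explicit list of the configurations where it fails; and until you have that, ``verify the finitely many small cases directly'' is not available, because you have no a priori bound confining possible exceptions to small $n$. As written, the proposal is a plan for the easy cases plus an acknowledged hole where the theorem actually lives.

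The missing idea --- and the route the paper takes --- is a reduction to Theorem~\ref{lem2} by local surgery, with no diameter analysis at all. The paper splits on whether there exists a vertex $u$ with $\degr{u}=\ecc{u}=2$. If not, every non-pending vertex has $\degr{v}\ecc{v}\geq 6$ and every pending vertex has $\degr{v}\ecc{v}\geq 3$, so $\eci(G)\geq 6n-3p>\eci(\extH{n}{p})$ (this replaces your three ``easy'' diameter ranges in one stroke). If such a $u$ exists, its two neighbours $v,w$ dominate $G$, and one rewires edges around $v$ and $w$ (shifting $v$'s private neighbours to $w$, deleting $v$'s redundant edges, adding $vw$ if absent) to produce a graph $G'$ of the same order, the same number of pending vertices, and exactly one dominating vertex; a term-by-term comparison of the few affected vertices gives $\eci(G)>\eci(G')\geq\eci(\extH{n}{p})$, and the finitely many configurations where the local comparison is not strict ($\C{4}$, $\Path{5}$, $\C{5}$) are exactly where the exceptions surface. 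Notably, your own remark that a low-degree eccentricity-$2$ vertex forces its neighbours to cover $V$ is precisely the starting point of that surgery --- but you never convert it into a comparison with a one-dominating-vertex graph, which is what makes the tight regime tractable.
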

\begin{proof}
	Let $U$ be the subset of vertices $u$ in $G$ such that  $\degr{u}=\ecc{u}=2$. If $U$ is empty, then all non-pending vertices $v$ in $G$ have $\degr{v}\geq 2$ and $\ecc{v}\geq 2$ (since $G$ has no dominating vertex), and at least one of these two inequalities is strict, which implies $\degr{u}\ecc{u}\geq 6$. Also, every pending vertex $w$ has $\ecc{w}\geq 3$ since their only neighbor is not dominant. Hence, $\eci(G)\geq 6(n-p)+3p=6n-3p$. Since $p\leq n-3$, we have $\eci(G)\geq 5n-2p+3>\eci(\extH{n}{p})$.
	
	So, assume $U\neq \emptyset$. Let $u$ be a vertex in $U$, and let $v,w$ be its two neighbors. Also, let $A=N(v)\setminus (N(w)\cup\{w\})$, $B=(N(v)\cup N(w))\setminus \{u\}$, and $C=N(w)\setminus (N(v)\cup\{v\})$. Since $\ecc{u}=2$, all vertices of $G$ belong to $A\cup B\cup C\cup \{u,v,w\}$. We finally define $B'$ as the subset of $B$ that contains all vertices $b$ of $B$ with $\degr{b}=2$ (i.e., their only neighbors are $v$ and $w$).
	
	\vspace{0.5cm}\noindent\emph{Case 1}: $v$ is adjacent to $w$.
	
	\noindent $A\neq \emptyset$ else $w$ is a dominating vertex, and $C\neq \emptyset$ else $v$ is dominating. Let $G'$ be the graph obtained from $G$ by replacing every edge linking $v$ to a vertex $a\in A$ with an edge linking $w$ to $a$, and by removing all edges linking $v$ to a vertex of $B\setminus B'$. Clearly, $G'$ is also a connected graph of order $n$ with $p$  pending vertices, and $w$ is the only dominating vertex in $G'$. It follows from Theorem \ref{lem2} that $\eci(G')\geq \eci(\extH{n}{p})$. Also,
	\begin{itemize}
		\vspace{-0.1cm}\item $\degr{u}=\degrp{u}$ and $\ecc{u}=\eccp{u}$;
		\vspace{-0.2cm}\item $\degr{x}=\degrp{x}$ and $\ecc{x}\geq \eccp{x}$ for all $x\in A\cup C$;
		\vspace{-0.2cm}\item $\degr{x}=\degrp{x}$ and $\ecc{x}=\eccp{x}$ for all $x\in B'$;
		\vspace{-0.2cm}\item $\degr{x}>\degrp{x}$ and $\ecc{x}=\eccp{x}$ for all $x\in B\setminus B'$.		
	\end{itemize}
	Hence, \[
	\sum\limits_{x\in A\cup B\cup C\cup\{u\}}\degr{x}\ecc{x} \geq \sum\limits_{x\in A\cup B\cup C\cup\{u\}}
	\degrp{x}\eccp{x}.
	\]
	Moreover, 
		\begin{itemize}
			\vspace{-0.1cm}\item $\degr{v}\ecc{v} + \degr{w}\ecc{w} = 2(\sizs{A}+\sizs{B}+2) +
			2(\sizs{C}+\sizs{B}+2) = 2\sizs{A}+4\sizs{B}+2\sizs{C}+8$;
		\vspace{-0.2cm}\item 	$\degrp{v}\eccp{v}+\degrp{w}\eccp{w}=2(\sizs{B'}+2)+\sizs{A}+\sizs{B}+\sizs{C}+2.$
			\end{itemize}
			We therefore have 
			\[
			\begin{array}{rll}
		\eci(G)-\eci(G')&=&\quad\sum\limits_{x\in A\cup B\cup C\cup\{u\}}\degr{x}\ecc{x}+(\degr{v}\ecc{v} + \degr{w}\ecc{w})\\
		 &  & - \sum\limits_{x\in A\cup B\cup C\cup\{u\}}\degrp{x}\eccp{x} - (\degrp{v}\eccp{v}+\degrp{w}\eccp{w})\\
		 &\geq& (2\sizs{A}+4\sizs{B}+2\sizs{C}+8)-(2(\sizs{B'}+2)+\sizs{A}+\sizs{B}+\sizs{C}+2)\\
		 &=&\sizs{A}+\sizs{C}+3(\sizs{B'}+\sizs{B\setminus B'})-2\sizs{B'}+2\\
		 &=&\sizs{A}+\sizs{C}+\sizs{B'}+3\sizs{B\setminus B'}+2 > 0\\
			\end{array}			
			\]
	This implies $\eci(G)>\eci(G')\geq \eci(\extH{n}{p})$.
	
\vspace{0.5cm}\noindent\emph{Case 2}: $v$ is not adjacent to $w$, and both $A\cup (B\setminus B')$ and $C\cup (B\setminus B')$ are nonempty.
		
\noindent Let $G'$ be the graph obtained from $G$ by adding an edge linking $v$ to $w$, by replacing every edge linking $v$ to a vertex $a\in A$ with an edge linking $w$ to $a$, and by removing all edges linking $v$ to a vertex of $B\setminus B'$. Clearly, $G'$ is also a connected graph of order $n$ with $p$ pending vertices. As in the previous case, we have 
	\[
	\sum\limits_{x\in A\cup B\cup C\cup\{u\}}\degr{x}\ecc{x} \geq \sum\limits_{x\in A\cup B\cup C\cup\{u\}}
	\degrp{x}\eccp{x}.
	\]\\
	
\noindent Moreover, $\ecc{v}\geq 2$ and $\ecc{w}\geq 2$, while $\eccp{v}\leq 2$ and $\eccp{w}=1$, which implies
	\begin{itemize}
		\vspace{-0.1cm}\item $\degr{v}\ecc{v} + \degr{w}\ecc{w} \geq 2(\sizs{A}+\sizs{B}+1) +
		2(\sizs{C}+\sizs{B}+1)=2\sizs{A}+4\sizs{B}+2\sizs{C}+4$;
		\vspace{-0.1cm}\item 	$\degrp{v}\eccp{v}+\degrp{w}\eccp{w}\leq 2(\sizs{B'}+2)+\sizs{A}+\sizs{B}+\sizs{C}+2.$
	\end{itemize}
	We therefore have 
	\[
	\begin{array}{rll}
	\eci(G)-\eci(G')&\geq& (2\sizs{A}+4\sizs{B}+2\sizs{C}+4)-(2(\sizs{B'}+2)+\sizs{A}+\sizs{B}+\sizs{C}+2)\\
	&=&\sizs{A}+\sizs{C}+\sizs{B'}+3\sizs{B\setminus B'}-2.\\
	\end{array}			
	\]
	If $B\setminus B'\neq \emptyset$, $w$ is the only dominating vertex in $G'$, and $\eci(G)-\eci(G')>0$. It then follows from Theorem \ref{lem2} that $\eci(G)>\eci(G')\geq \eci(\extH{n}{p})$. So assume $B\setminus B'= \emptyset$. Since $A\cup (B\setminus B')\neq \emptyset$, and $C\cup (B\setminus B')\neq \emptyset$, we have $A\neq \emptyset$ and $C\neq \emptyset$. Hence, once again, $w$ is the only dominating vertex in $G'$, and we know from Theorem \ref{lem2} that $\eci(G')\geq \eci(\extH{n}{p})$. 
	\begin{itemize}
		\item If $\sizs{B'}\geq 1$, $\sizs{A}\geq 2$ or $\sizs{C}\geq 2$, then $\eci(G)>\eci(G')\geq \eci(\extH{n}{p})$.
		\item If $\sizs{B'}=0$ and $\sizs{A}=\sizs{C}=1$, there are two possible cases:
		\begin{itemize}
			\item if the vertex in $A$ is not adjacent to the vertex in $C$, then $n=5$, $p=2$, $G\simeq \Path{5}$ and $G'\simeq\extH{5}{2}$. Hence, $\eci(G)=24>16=\eci(\extH{n}{p})$;
			\item if the vertex in $A$ is adjacent to the vertex in $C$, then $n=5$, $p=0$, $G\simeq \C{5}$ and $G'\simeq\extH{5}{2}$. Hence, $\eci(G)=\eci(\extH{n}{p})=20$;
		\end{itemize}
		
	\end{itemize}

\vspace{0.4cm}\noindent\emph{Case 3}: $v$ is not adjacent to $w$, and 	at least one of $A\cup (B\setminus B')$ and $C\cup (B\setminus B')$ is empty.

\noindent Without loss of generality, suppose $A\cup (B\setminus B')=\emptyset$. We distinguish two subcases.

\vspace{0.4cm}\noindent\emph{Case 3.1}: $B'=\emptyset$.

\noindent Since $n\geq 4$, $C\neq \emptyset$. Also, since $p\leq n-3$, there is a non-pending vertex $r\in C$. Let $G'$ be the graph obtained from $G$ by removing the edge linking $u$ and $v$ and by linking $v$ to $w$ and to $r$. Note that $G'$ is a connected graph of order $n$ with $p$ pending vertices~: while $v$ was pending in $G$, but not $u$, the situation is the opposite in $G'$. Note also that Theorem \ref{lem2} implies $\eci(G')\geq \eci(\extH{n}{p})$ since $w$ is the only dominating vertex in $G'$. We then have:
	\begin{itemize}
		\vspace{-0.1cm}\item $\degr{u}\!=\!2$, $\degrp{u}\!=\!1$ and $\ecc{u}\!=\!\eccp{u}\!=\!2$, which gives $\degr{u}\ecc{u}-\degrp{u}\eccp{u}=2$;
		\vspace{-0.1cm}\item $\degr{v}\!=\!1$, $\degrp{v}\!=\!2$ $\ecc{v}\!=\!3$ and $\eccp{v}\!=\!2$, which gives $\degr{v}\ecc{v}-\degrp{v}\eccp{v}=\!-\!1$;
		\vspace{-0.1cm}\item $\degr{w}\!=\!n-2$, $\degrp{w}\!=\!n-1$ $\ecc{w}\!=\!2$ and $\eccp{w}\!=\!1$, which gives $\degr{w}\ecc{w}-\degrp{w}\eccp{w}=n-3$;
		\vspace{-0.1cm}\item $\degrp{r}\!=\!\degr{r}\!+\!1$, $\ecc{r}\!=\!3$ and $\eccp{w}\!=\!2$, which gives $\degr{r}\ecc{r}-\degrp{r}\eccp{r}=\degr{r}-2$;
		\vspace{-0.1cm}\item $\degrp{c}\!=\!\degr{c}$ and $\ecc{c}>\eccp{c}$ for all $c\in (C\setminus\{r\})$. Since $r$ has a neighbor in $C$ of degree at least 2, we have $\sum_{c\in C\setminus\{r\}}(\degr{c}\ecc{c}-\degrp{c}\eccp{c}\geq 2)$. 	\end{itemize}
Hence, $\eci(G)-\eci(G') \geq 2 - 1 + \underbrace{n-3}_{> 0} +
\underbrace{\degr{r}-2}_{\geq 0}+2 > 0$, which implies $\eci(G)>\eci(G')\geq \eci(\extH{n}{p})$.\\

\vspace{0.4cm}\noindent\emph{Case 3.2}: $B'\neq \emptyset$.

\noindent Let $b_1,\ldots,b_{\sizs{B'}}$ be the vertices in $B'$. Remember that the unique neighbors of these vertices are $v$ and $w$. Let $G'$ be the graph obtained from $G$ as follows. We first add an edge linking $v$ to $w$. Then, for every odd $i<\sizs{B'}$, we add an edge linking $b_i$ to $b_{i+1}$ and remove the edges linking $v$ to $b_i$ and to $b_{i+1}$.
We then have
\begin{itemize}
	\vspace{-0.2cm}\item $\degr{x}=\degrp{x}$ and $\ecc{x}=\eccp{x}$ for all $x\in B'\cup C\cup\{u\}$;
	\vspace{-0.2cm}\item $\degr{v}=\sizs{B'}+1$, $\degrp{v}\leq 3$,  $\ecc{v}\geq 2$, and $\eccp{v}\leq 2$;
	\vspace{-0.2cm}\item $\degr{w}=\sizs{B'}+\sizs{C}+1$, $\degrp{w}=\sizs{B'}+\sizs{C}+2$,  $\ecc{w}=2$, and $\eccp{w}=1$.
\end{itemize}
Hence, 
	\[
	\begin{array}{rll}
	\eci(G)-\eci(G')&=&\degr{v}\ecc{v}+\degr{w}\ecc{w}-\degrp{v}\eccp{v}+\degrp{w}\eccp{w}\\
	&\geq& 2(\sizs{B'}+1) + 2(\sizs{B'}+\sizs{C}+1) - 6-(\sizs{B'}+\sizs{C}+2)\\
	&=& 3\sizs{B'}+\sizs{C}-4.
	\end{array}			
\]
IF $\sizs{B'}\geq 2$ or $\sizs{C}\geq 2$, then $\eci(G)-\eci(G')>0$, and since $w$ is then the only dominating vertex in $G'$, we know from Theorem \ref{lem2} that $\eci(G)>\eci(G')\geq \eci(\extH{n}{p})$. So, assume  $\sizs{B'}= 1$ and $\sizs{C}\leq 1$:
\begin{itemize}
	\vspace{-0.2cm}\item if $\sizs{C}= 0$ then $n=4$, $p=0$, $G\simeq\C{4}$ and $G'\simeq\extH{4}{0}$ which implies $\eci(G)=16>14=\eci(\extH{n}{p})$;
	\vspace{-0.2cm}\item if $\sizs{C}= 1$ then $n=5$, $p=1$, $\eci(G)=23$ and $G'\simeq\extH{5}{1}$ which implies $\eci(G)>20=\eci(\extH{n}{p})$.
\end{itemize}
\end{proof}

We can now combine these results as follows. Assume $G$ is a connected graph of order $n$ with $p$ pending vertices. If $p\geq 1$, then $G$ has at most one dominating vertex, and it follows from Theorems \ref{lem2} and \ref{lem4} that $\extH{n}{p}$ is the only graph with maximum eccentric connectivity index. If $p=0$ and $n=4$, then $G$ cannot contain exactly one dominating vertex, and Theorems \ref{lem3} and \ref{lem4} show that $\K{4}$ is the only graph with maximum eccentric connectivity index. If $p=0$ and $n=5$, Theorems \ref{lem2}, \ref{lem3} and \ref{lem4} show that $\extH{5}{0}$, $\St{5}{2}$, $\K{5}$ and $\C{5}$ are the only candidates to minimize the eccentric connectivity index, and since $\eci(\extH{5}{0})=\eci(\St{5}{2})=\eci(\K{5})=\eci(\C{5})=20$, the four graphs are the optimal ones. If $p=0$ and $n\geq 6$ then we know from Theorems \ref{lem2}, \ref{lem3} and \ref{lem4} that $\St{n}{2}$ and $\extH{n}{0}$ are the only candidates to minimize the eccentric connectivity index. Since $\eci(\St{6}{2})=26<27=\eci(\extH{6}{0})$, $\eci(\St{7}{2})=32>30=\eci(\extH{7}{0})$ and $\eci(\St{n}{2})=6n-10>5n-3\geq\eci(\extH{n}{0})$ for $n\geq 8$, we deduce that $\St{6}{2}$ is the only graph with maximum eccentric connectivity index when $n=6$ and $p=0$, while $\extH{n}{0}$ is the only optimal graph when $n\geq 7$ and $p=0$. This is summarized in the following Corollary.

\begin{cor}
Let $G$ be a connected graph of order $n\geq 4$ with $p\leq n-3$ pending vertices. 
\begin{itemize}
	\vspace{-0.2cm}\item If $p\geq 1$ then $\eci(G)\geq \eci(\extH{n}{p})$, with equality if and only if $G\simeq \extH{n}{p}$;
	\vspace{-0.0cm}\item If $p=0$ then
	\vspace{-0.0cm}\begin{itemize}
		\item if $n=4$ then $\eci(G)\geq 12$, with equality if and only if $G\simeq \K{4}$;
	\vspace{-0.0cm}	\item if $n=5$ then $\eci(G)\geq 20$, with equality if and only if $G\simeq \extH{5}{0}$, $\St{5}{2}$, $\K{5}$ or $\C{5}$;
	\vspace{-0.0cm}	\item if $n=6$ then $\eci(G)\geq 26$, with equality if and only if $G\simeq \St{6}{2}$;
	\vspace{-0.0cm}	\item if $n\geq 7$ then $\eci(G)\geq \eci(\extH{n}{0})$, with equality if and only if $G\simeq \extH{n}{0}$.		
	\end{itemize} 
\end{itemize}  
\end{cor}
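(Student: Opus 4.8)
The plan is to classify $G$ by the number $x$ of its dominating vertices, which is necessarily $0$, exactly $1$, or at least $2$, and to feed each case into the matching lower bound already proved: Theorem~\ref{lem2} covers $x=1$, Theorem~\ref{lem3} covers $x\geq 2$, and Theorem~\ref{lem4} covers $x=0$. The first move is a cheap structural observation that kills most of the casework: if $G$ had two dominating vertices $a$ and $b$, then every remaining vertex would be adjacent to both and hence have degree at least $2$, so $G$ would have no pending vertex at all. Thus \emph{the case $x\geq 2$ is possible only when $p=0$}, which immediately separates the statement into its two bullets.

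For $p\geq 1$ the argument is then short. Only $x=0$ or $x=1$ can occur. If $x=1$, Theorem~\ref{lem2} gives $\eci(G)\geq \eci(\extH{n}{p})$ with equality exactly for $G\simeq\extH{n}{p}$; if $x=0$, Theorem~\ref{lem4} applies, and because its single exceptional configuration needs $p=0$, it delivers the \emph{strict} inequality $\eci(G)>\eci(\extH{n}{p})$ here. Hence $\extH{n}{p}$ is the unique minimizer, which is the first bullet.

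The real work is the case $p=0$, where all three values of $x$ are available and I would compare the three bounds as a function of $n$. For $n=4$ I would first exclude $x=1$: the three non-dominating vertices would each be forced to have degree exactly $2$ (at least $2$ so as not to be pending, at most $2$ so as not to be a second dominating vertex), making the graph they induce $1$-regular on three vertices, which is impossible; so only Theorems~\ref{lem3} and~\ref{lem4} contribute and $\K{4}$, with $\eci(\K{4})=12$, wins. For $n=5$ all three bounds evaluate to $20$, so I would just collect the graphs attaining it, namely $\extH{5}{0}$, $\St{5}{2}$, $\K{5}$ and the exceptional $\C{5}$ of Theorem~\ref{lem4}. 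For $n=6$, comparing $\eci(\St{6}{2})=26$ with $\eci(\extH{6}{0})=27$ isolates $\St{6}{2}$. Finally, for $n\geq 7$ the contest is $6n-10$ (for $\St{n}{2}$) against $\eci(\extH{n}{0})\leq 5n-3$, and since $6n-10>5n-3$ for $n>7$, with the boundary $n=7$ settled by the sharper odd-parity value $5n-5=30<32$, the graph $\extH{n}{0}$ is the unique minimizer.

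The main obstacle is exactly this $p=0$ bookkeeping at small $n$, where two traps must be avoided. First, $\extH{4}{0}\simeq\St{4}{2}$ has \emph{two} dominating vertices, so its index is not the one predicted by the generic formula for $\eci(\extH{n}{p})$ and must be evaluated by hand before it can be discarded. Second, the large-$n$ comparison depends on the parity of $n$ through the two formulas for $\eci(\extH{n}{0})$, and the clean inequality $6n-10>5n-3$ just fails at $n=7$; that single value therefore has to be tested against the smaller odd-parity expression $5n-5$ rather than the uniform estimate $5n-3$. Everything else is a routine evaluation of $\eci$ on the explicitly described extremal graphs.
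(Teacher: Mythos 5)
Your proposal is correct and follows essentially the same route as the paper: classify $G$ by its number of dominating vertices, invoke Theorems~\ref{lem2}, \ref{lem3} and~\ref{lem4} respectively, observe that two or more dominating vertices force $p=0$, and settle the $p=0$ cases by direct comparison of the bounds (including the parity issue at $n=7$). In fact you make explicit two steps the paper only asserts --- why $p\geq 1$ excludes multiple dominating vertices, and why no graph on $4$ vertices with $p=0$ has exactly one dominating vertex --- so nothing is missing.
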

 
\section{Conclusion}
We have characterized the graphs with smallest eccentric connectivity index among those of fixed order $n$ and fixed or non-fixed number of pending vertices. Such a characterization for graphs with a fixed order $n$ and a fixed size $m$ was given in \cite{ZD10}. It reads as follows.
\begin{thm}Let $G$ be a connected graph of order $n$ with $m$ edges, where $n-1\leq m <{ {n}\choose{2}}.$ Also, let
	\[k=\left\lfloor\frac{2n-1-\sqrt{(2n-1)^2-8m}}{2}\right\rfloor.
	\]
	Then $\eci(G)\geq 4m-k(n-1)$, with equality if and only if $G$ has $k$ dominating vertices and $n-k$ vertices of eccentricity 2.
\end{thm}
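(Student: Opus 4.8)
The plan is to reduce the whole statement to counting dominating vertices. First I would partition $V$ by eccentricity: a vertex $u$ satisfies $\ecc{u}=1$ precisely when it is dominating (degree $n-1$), and every other vertex $v$ has $\ecc{v}\geq 2$ since $G$ is not complete. Writing $x$ for the number of dominating vertices and using $\sum_{v\in V}\degr{v}=2m$ together with $\sum_{u\text{ dominating}}\degr{u}=x(n-1)$, I would bound
\[
\eci(G)=\sum_{\ecc{v}=1}\degr{v}+\sum_{\ecc{v}\geq 2}\degr{v}\ecc{v}\geq x(n-1)+2\bigl(2m-x(n-1)\bigr)=4m-x(n-1),
\]
with equality if and only if every non-dominating vertex has eccentricity exactly $2$.

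Next I would show that $k$ is exactly the largest possible number of dominating vertices. If $G$ has $x$ dominating vertices, they contribute a clique on $x$ vertices together with all edges joining them to the remaining $n-x$ vertices, i.e.\ at least $\binom{x}{2}+x(n-x)=\tfrac12 x(2n-x-1)$ edges, so $m\geq \tfrac12 x(2n-x-1)$. Rearranging gives $x^2-(2n-1)x+2m\geq 0$, a quadratic in $x$ whose discriminant $(2n-1)^2-8m$ is positive because $m<\binom{n}{2}$. Its roots are $\tfrac12\bigl((2n-1)\pm\sqrt{(2n-1)^2-8m}\bigr)$, and since $G$ is not complete we have $x\leq n-1$, which lies strictly below the larger root $n-\tfrac12+\tfrac12\sqrt{(2n-1)^2-8m}$. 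Hence $x$ must lie below the smaller root, giving $x\leq \bigl\lfloor\tfrac12((2n-1)-\sqrt{(2n-1)^2-8m})\bigr\rfloor=k$.

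Finally I would combine the two steps. Since $n-1>0$, the quantity $4m-x(n-1)$ decreases in $x$, so $\eci(G)\geq 4m-x(n-1)\geq 4m-k(n-1)$, which is the claimed bound. For the equality discussion, $\eci(G)=4m-k(n-1)$ forces both $x=k$ and every non-dominating vertex to have eccentricity $2$. I would observe that $k\geq 1$ whenever $m\geq n-1$ (at the boundary $m=n-1$ the discriminant becomes the perfect square $(2n-3)^2$ and yields $k=1$), so as soon as a graph has $k\geq 1$ dominating vertices, every remaining vertex is automatically within distance $2$ of all others through a common dominating neighbour and therefore has eccentricity exactly $2$. Thus the equality graphs are precisely those with exactly $k$ dominating vertices and $n-k$ vertices of eccentricity $2$, and I would close by exhibiting one such graph (place $k$ dominating vertices and distribute the remaining $m-\tfrac12 k(2n-k-1)$ edges among the other $n-k$ vertices, which by maximality of $k$ cannot create a further dominating vertex) to confirm that the bound is attained.

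The main obstacle I anticipate is the combinatorial reading of $k$: recognising that the floor expression is exactly the maximum number of dominating vertices compatible with $m$ edges, and in particular justifying cleanly that the relevant root is the smaller one, which follows from $x\leq n-1<n-\tfrac12+\tfrac12\sqrt{(2n-1)^2-8m}$. The eccentricity bookkeeping and the monotonicity of $4m-x(n-1)$ in $x$ are routine by comparison; the only additional point needing care is verifying realizability, so that the value $4m-k(n-1)$ is genuinely attained rather than characterized vacuously.
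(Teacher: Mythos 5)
Your proposal is correct, but there is an important caveat about the comparison: the paper does \emph{not} prove this statement. It appears only in the Conclusion, quoted as a known result of Zhou and Du \cite{ZD10}, so there is no in-paper proof to measure your argument against; it must stand on its own — and it does. All three of your steps check out: the decomposition $\eci(G)\geq x(n-1)+2\bigl(2m-x(n-1)\bigr)=4m-x(n-1)$, with equality exactly when every non-dominating vertex has eccentricity $2$; the edge-count bound $m\geq\binom{x}{2}+x(n-x)$, which gives $x^2-(2n-1)x+2m\geq 0$; and the selection of the smaller root, justified correctly by $x\leq n-1<n-\tfrac12+\tfrac12\sqrt{(2n-1)^2-8m}$ (the discriminant being positive because $m<\binom{n}{2}$). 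The only step you leave compressed is the attainability claim at the end, namely that distributing the leftover $m-\tfrac12 k(2n-k-1)$ edges among the $n-k$ non-dominating vertices cannot create a new dominating vertex. This is true but deserves the short computation: since $k=\lfloor\rho\rfloor$ for the smaller root $\rho$, we have $q(k+1)<0$, i.e.\ $m<\tfrac12(k+1)(2n-k-2)$, so the number of leftover edges is strictly less than $n-k-1$; as a vertex would need $n-k-1$ such edges to become adjacent to all other non-dominating vertices, no new dominating vertex can arise however the edges are placed. (Note that this attainability check is not logically required by the statement, whose equality clause only characterizes which graphs achieve the bound, but it does show the bound is not vacuous.) Within this paper, the closest relative of your argument is the proof of Theorem \ref{lem3}, which likewise splits the vertex set into dominating and non-dominating vertices and minimizes a quadratic in the number $x$ of dominating vertices; your proof refines that idea for fixed $m$ by using $\sum_{v}\degr{v}=2m$ in place of the cruder bound $\degr{v}\geq x$, and your observation that any graph with at least one dominating vertex automatically has all other eccentricities equal to $2$ cleanly explains why the equality case reduces to counting dominating vertices.
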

It is, however, an open question to characterize the graphs with largest eccentric connectivity index among those of fixed  order $n$ and fixed size $m$. The following conjecture appears in \cite{Hau18}, where $\extg{n}{D}{k}$ is
the graph of order $n$ constructed from a path
$u_0-u_1-\ldots-u_D$ by joining each vertex of a clique $\K{n-D-1}$
to $u_0$ and $u_1$, and $k$ vertices of the clique to $u_2$.

\begin{conj}
	Let $G$ be a connected graph of order $n$ with $m$ edges, where $n-1\leq m \leq{ {n-1}\choose{2}}.$ Also, let
	\[D=\left\lfloor\frac{2n+1-\sqrt{17+8(m-n)}}{2}\right\rfloor \text{ and } k=m-{{n-D+1}\choose{2}}-D+1.
	\]
	Then $\eci(G)\leq \eci(\extg{n}{D}{k})$, with equality if and only if $G\simeq \extg{n}{D}{k}$ or $D=3$, $k=n-4$ and $G$ is the graph constructed from a path $u_0-u_1-u_2-u_3$,  by joining $1\leq i\leq n-3$ vertices of a clique $\K{n-4}$ to $u_0,u_1,u_2$ and the $n-4-i$ other vertices of $\K{n-4}$ to $u_1,u_2,u_3$.
\end{conj}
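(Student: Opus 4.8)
This is a maximization claim --- largest \eci for fixed order $n$ and fixed size $m$ --- and it is stated as a conjecture, so I describe an attack in the style of this paper and of \cite{Hau18} and isolate where it resists completion. The plan is to fix a graph $G$ maximizing \eci among connected graphs with $n$ vertices and $m$ edges and to reduce it, through a sequence of \eci-non-decreasing local rearrangements, to $\extg{n}{D}{k}$. The governing intuition is the tension already visible in Table \ref{table1}: to make $\sum_v \degr{v}\ecc{v}$ large one wants both large eccentricities, which demand a long diametral path, and large degrees, which demand a dense subgraph, and with $m$ fixed these two goals compete for the same edges. I would therefore fix a diametral path $u_0u_1\cdots u_D$ (so $D$ is the diameter of $G$), let $Q$ be the remaining $q:=n-D-1$ vertices, and aim to show that an optimal $G$ is this path together with a clique on $Q$ attached near the end $u_0$.

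The reduction proceeds in three structural steps. \emph{(i) The set $Q$ induces a clique.} If two vertices of $Q$ are non-adjacent, I add the missing edge and compensate (to keep the size equal to $m$) by deleting an edge that lengthens the path or sits deep along it; this raises a degree inside the dense part while lowering only eccentricities that are already small, so \eci does not decrease. \emph{(ii) Every vertex of $Q$ is joined to $u_0$ and $u_1$.} One shows that redirecting an attachment of a $Q$-vertex toward the two path-endpoints nearest the clique raises that vertex's degree contribution without shortening the diametral path. \emph{(iii) The residual budget attaches $Q$-vertices to $u_2$, giving the parameter $k$.} Here one compares, edge for edge, attaching a clique vertex at $u_2$ versus at some deeper $u_j$ and argues the former yields at least as large a sum $\degr{\cdot}\ecc{\cdot}$. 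Carrying out (i)--(iii) forces $G\simeq\extg{n}{D}{k}$ with the size identity $m = D + \binom{n-D+1}{2} - 1 + k$, which is exactly the stated value of $k$.

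It then remains to optimize over the single integer parameter $D$. With $q=n-D-1$ and $k$ determined by $m$ through the identity above (so that feasibility $0\le k\le q$ ties $D$ to $m$), the quantity $\eci(\extg{n}{D}{k})$ becomes an explicit one-variable function $g(D)$, whose leading behaviour is governed by the clique's contribution $\sim q^2D$; maximizing $g$ over the feasible range and rounding should yield $D=\lfloor (2n+1-\sqrt{17+8(m-n)})/2\rfloor$, with the square root arising from the stationary (completed-square) condition of the resulting quadratic in $D$ and the floor from the discreteness of $D$. The exceptional equality case $D=3$, $k=n-4$ is precisely the boundary where the path is too short to break the symmetry between its two ends: there $Q$ is joined entirely to $u_0,u_1,u_2$, and one may instead split $Q$ between the ends --- joining $i$ vertices to $u_0,u_1,u_2$ and the remaining $n-4-i$ to $u_1,u_2,u_3$ --- without changing \eci, which accounts for the second extremal family.

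The main obstacle is the monotonicity in steps (i)--(iii). Unlike the minimization arguments of Theorems \ref{lem2}--\ref{lem4}, where each shift created low-degree pendant structure with a clean sign, a maximization shift can shorten the diametral path and thereby decrease the eccentricity of \emph{every} interior path vertex at once, so the sign of $\eci(G)-\eci(G')$ is not local. The hard part is to rule out maximizers that ``waste'' edges on a path longer than $D$ or on attachments deeper than $u_2$: this requires a global accounting of how the entire eccentricity profile $\ecc{u_0},\ldots,\ecc{u_D}$ responds to a shift, rather than the two- or three-vertex comparisons that sufficed for the minimization results. Controlling this global response --- together with the floor and boundary bookkeeping that produces the $D=3$ degeneracy --- is presumably why the statement is recorded as a conjecture rather than a theorem.
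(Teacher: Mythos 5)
The statement you were handed is, in the paper itself, an open conjecture (restated from \cite{Hau18}): the paper contains no proof of it, so there is no ``paper proof'' to compare your attempt against, and your decision to present an attack plan together with an honest analysis of the obstruction is the right response rather than a defect. Your sketch is sound as far as it goes. The size identity $m = D + \binom{n-D+1}{2} - 1 + k$ is exactly the paper's definition of $k$ rearranged; the alternative extremal family for $D=3$, $k=n-4$ does have the same index as $\extg{n}{3}{n-4}$ by the end-swapping symmetry you describe; and the obstruction you name --- that in the maximization direction a single edge shift can shorten a diametral path and thereby change the eccentricity of \emph{every} vertex at once, destroying the locality that makes the sign analysis in Theorems \ref{lem2}--\ref{lem4} work --- is precisely what distinguishes this problem from the minimization results the paper actually proves, and is presumably why it remains open.

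One concrete correction to your account of where the formula for $D$ comes from: the square root does not arise from a stationarity or completed-square condition on the one-variable objective $g(D)$, as you suggest, but from the feasibility boundary. Requiring $k = m - \binom{n-D+1}{2} - D + 1 \le n-D-1$ (the clique has only $n-D-1$ vertices available to join to $u_2$) is a quadratic inequality in $D$ whose solution is exactly $D \le \frac{2n+1-\sqrt{17+8(m-n)}}{2}$, so the conjectured $D$ is the \emph{largest feasible diameter}, and the conjecture's implicit claim is that $g$ is increasing throughout the feasible range --- with $m$ fixed, spend edges on diameter first, then fill in with $k$. This reading is consistent with Table \ref{table1}: the unconstrained optimizer $\extG{n}{D}$ with $D = \lceil (n+1)/3\rceil + 1$ equals $\extg{n}{D}{n-D-1}$, i.e., it sits exactly on this feasibility boundary for its own edge count. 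The distinction matters for any future attempt: one needed ingredient is monotonicity of $D \mapsto \eci(\extg{n}{D}{k(D)})$ up to the boundary, which is a closed-form computation on the family and much easier than your structural steps (i)--(iii), whereas hunting for an interior stationary point of $g$ would be a dead end.
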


\bibliographystyle{acm}
%\bibliography{eci_diam}

\end{document}